\numberwithin{equation}{section}
\newtheorem{theorem}{Theorem}[section]
\newtheorem{lemma}[theorem]{Lemma}
\newtheorem{proposition}[theorem]{Proposition}
\newtheorem{defin}[theorem]{Definition}
\newenvironment{proof}{\noindent \textbf{Proof: }}{\hfill
$\Box$  \vspace{1ex}}
\newenvironment{definition}{\begin{defin}\em}{\end{defin}}
\newtheorem{defins}[theorem]{Definitions}
\newtheorem{exs}[theorem]{Examples}
\newtheorem{ex}[theorem]{Example}
\newtheorem{rem}[theorem]{Remark}
\newtheorem{rems}[theorem]{Remarks}
\def\fq{\mathbb{F}_q}
\def\N_0{\mathbb{N}_0}
\def\RM{\textrm{RM}}
\def\prm{\textrm{PRM}}
\begin{document}


\begin{center}
{\Large\textbf{The next-to-minimal weights of binary projective Reed-Muller 
codes}}
\end{center}
\vspace{3ex}

\noindent
\begin{center} 
\textsc{C\'{\i}cero  Carvalho and V. G. Lopez Neumann}\footnote{\noindent 
Authors' emails: cicero@ufu.br and victor.neumann@ufu.br. Both authors were 
partially supported by  grants from CNPq and FAPEMIG.\\
Published in IEEE Transactions on Information Theory, vol.\ 62, issue
11, Nov.\ 2016. \\  http://doi.org/10.1109/TIT.2016.2611527} 
\end{center}

\noindent
{\small Faculdade de Matem\'atica,  Universidade Federal de Uberl\^andia,    Av.\ J.\ N.\ \'Avila 2121, 38.408-902 \ \ Uberl\^andia - MG, Brazil}
\vspace{3ex}

\noindent
{\footnotesize \textbf{Abstract.} 
Projective Reed-Muller codes were introduced by Lachaud, in 1988 and their 
dimension and minimum distance were determined by Serre and S\o{}rensen in 
1991. In 
coding 
theory one is also interested in the higher Hamming weights, to study the code 
performance. Yet, not many values of the higher Hamming weights are known for 
these codes, not even the second lowest weight (also known as next-to-minimal 
weight) is completely determined. In this paper we determine all the values of 
the next-to-minimal weight for the binary projective Reed-Muller codes, which 
we show to be equal to the next-to-minimal weight of Reed-Muller 
codes in most, but not all, cases.}
\vspace{2ex}

\section{Introduction} \label{intro}

The so-called Reed-Muller codes appeared in 1954, when they were defined by 
D.E.\ Muller (\cite{muller}) and given a decoding algorithm by I.S.\ Reed 
(\cite{reed}). They were codes defined over $\mathbb{F}_2$ and in 1968  Kasami, 
Lin, and Peterson (\cite{kasami}) extended the original definition to a finite 
field $\mathbb{F}_q$, where $q$ is any prime power. They named these codes 
``generalized Reed-Muller codes'' and presented some results on the weight 
distribution, the dimension of the codes being determined in later works. In 
coding theory one  is always interested in the values of the higher Hamming 
weights of a code because of their relationship with the code performance, but 
usually this is not a simple problem. For the generalized Reed-Muller codes, 
the complete determination of the second lowest Hamming weight, also called 
next-to-minimal weight,  
was only completed in 2010, when Bruen (\cite{bruen}) observed that the value 
of these weights could be obtained from unpublished results in the Ph.D.\ 
thesis 
of D.\ Erickson (\cite{erickson}) and Bruen's own results from 1992 and 2006. 

Twenty years after the definition of the generalized Reed-Muller codes the 
class of projective Reed-Muller codes  was introduced by Lachaud 
(\cite{lachaud}). The parameters of these codes were determined by Serre 
(\cite{serre}), for some cases, and by S\o{}rensen (\cite{sorensen}) for the 
general case, and they proved that the minimum distance of the projective 
Reed-Muller codes of order $d$ is equal to the minimum distance of the 
generalized Reed-Muller code of order $d - 1$ (see  \eqref{serre-sor}). The 
determination of the next-to-minimal weight for these codes 
is yet to be done, and  there are some results (also about higher Hamming 
weights) on this subject by Rodier and Sboui (\cite{rodier-sboui}, 
\cite{rodier-sboui2}, \cite{sboui1}) and also by Ballet and Rolland 
(\cite{ballet}). In this paper we present all the values  of the 
next-to-minimal weights for the case of binary projective 
Reed-Muller codes. Interestingly, we note that the next-to-minimal weight of 
the binary projective Reed-Muller codes of order $d$ is equal to the 
next-to-minimal weight of the Reed-Muller codes of order $d - 1$ in most but 
not 
all cases (see Theorem \ref{main}).
In the next section we recall the definitions of the 
generalized and projective Reed-Muller codes, and prove some results of 
geometrical nature that will allow us to determine the next-to-minimal weight 
of the binary projective Reed-Muller codes, which is done in the last 
section. 

\section{Preliminary results}

Let $\fq$ be a finite field and let $I_q = (X_1^q - X_1, \ldots, X_n^q - X_n) 
\subset \fq[X_1, \ldots, X_n]$ be the ideal of polynomials which vanish at all 
points $P_1, \ldots, P_{q^n}$ of the affine space $\mathbb{A}^n(\fq)$. Let 
$\varphi: \fq[X_1, \ldots, X_n]/I_q \rightarrow \fq^{q^n}$ be the $\fq$-linear 
transformation given by $\varphi(g + I_q) = (g(P_1) , \ldots, g(P_{q^n}) )$.

\begin{definition}
Let $d$ be a nonnegative integer.  The generalized Reed-Muller code of order 
$d$ is defined 
as $\RM(n, d) = \{ \varphi(g + I_q) \, | \, g = 0 \textrm{ or } \deg(g) \leq d 
\}$.  
\end{definition}

One may show that $\RM(n, d) = \fq^{q^n}$ if $d \geq n(q - 1)$, so in this case 
the minimum distance is 1. Let $d \leq n(q -1)$ and write $d = a (q - 1) + 
b$ with $0 < b \leq q - 1$, then the minimum distance of $\RM(n, d)$ is 
\[
W^{(1)}_{\RM}(n, d) = (q - b) q^{n - a - 1}.
\] 
The next-to-minimal 
 weight  $W^{(2)}_{\RM}(n, d)$ of 
$\RM(n, d)$ is equal to 
\[
W^{(2)}_{\RM}(n,d) =  (q - b)q^{n-a-1} + c q^{n-a-2}   = \left(1 + 
\dfrac{c}{(q- b)q}  \right) (q - b)q^{n-a-1} 
\]
where $c$ is equal to  $b -1$, $q-1$ or $q$, according to the values of $q$ 
and $d$ (see \cite[Theorems 9 and 10]{ballet}). We will quote specific values 
of $W^{(2)}_{\RM}(n,d)$ when we need them.	

Let $Q_1,\ldots, Q_N$ be the points of $\mathbb{P}^n(\fq)$, where $N = q^n + 
\ldots + q + 1$. It is known (see e.g. \cite{mercier-rolland} or 
\cite{idealapp}) that the homogeneous ideal $J_q \subset \fq[X_0, \ldots 
, X_n]$ of the polynomials which vanish in all points of $\mathbb{P}^n(\fq)$ is 
generated by $\{ X_j^q X_i - X_i^q X_j\, | 
\, 0 \leq i < j \leq n\}$. We denote by $\fq[X_0, \ldots , X_n]_d$
 (respectively, $(J_q)_d$) the $\fq$-vector subspace formed by the 
homogeneous polynomials of degree $d$ (together with the zero polynomial) in  
$\fq[X_0, \ldots , X_n]$ (respectively, $J_q$).

\begin{definition}
Let $d$ be a positive integer and 
let $\psi: \fq[X_0, \ldots , X_n]_d / (J_q)_d\rightarrow \fq^N$ be the 
$\fq$-linear transformation given by $\psi( f + (J_q)_d ) = (f(Q_1) \ldots, 
f(Q_n) )$, where we write the points of $\mathbb{P}^n(\fq)$ in the standard 
notation, i.e.\ the first nonzero entry from the left is equal to 1. The 
projective Reed-Muller code of order $d$, denoted by $\prm(n, d)$, is the image 
of 
$\psi$.
\end{definition}
The minimum distance $W^{(1)}_{\prm}(n, d)$ of $\prm(n, d)$ was determined by 
Serre  (\cite{serre}) and 
S\o{}rensen (see \cite{sorensen}) who proved that 
\begin{equation} \label{serre-sor}
W^{(1)}_{\prm}(n, d) = 
W^{(1)}_{\RM}(n, d-1).
\end{equation}
Let $\omega$ be the Hamming weight of $\varphi(g + I_q)$, where $g \in \fq[X_1, 
\ldots, X_n]$ is a polynomial of degree $d - 1$, and let $g^{(h)}$ be the 
homogenization of $g$ with respect to $X_0$. Then the degree of $g^{(h)}$ is $d 
- 1$ 
and the weight of $\psi(X_0 g^{(h)} + (J_q)_d)$ is $\omega$. This shows that, 
denoting by $W^{(2)}_{\prm}(n, d)$ the next-to-minimal weight of $\prm(n, d)$,  
we 
have 
\begin{equation}\label{2.1}
W^{(2)}_{\prm}(n, d) \leq W^{(2)}_{\RM}(n, d - 1).
\end{equation}
In the next section 
we 
will 
prove that, for binary projective Reed-Muller 
codes, equality 
holds in most but not all cases (see Theorem \ref{main}).

\begin{definition}
Let $f \in \fq[X_0, \ldots , X_n]_d$. The set of points of $\mathbb{P}^n(\fq)$ 
which are not zeros of $f$ is called the support of $f$, and we denote its 
cardinality by $| f |$ (hence $|f|$ is the weight of the codeword $\psi(f + 
(J_q)_d)$).
\end{definition}

\begin{lemma} \label{suporte}
Let $f \in \fq[X_0,\ldots , X_n]_d$ be a nonzero polynomial, and let $S$ be it 
support.  
Let  $G \subset \mathbb{P}^n(\fq)$ be a linear subspace
of  dimension $r$, with $r \in \{1, \ldots, n-1\}$, then  either $S \cap G = 
\emptyset$ or
$|S \cap G| \ge 
W^{(1)}_{\prm}(r, d)$.
\end{lemma}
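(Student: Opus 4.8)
The plan is to reduce the statement to the definition of the minimum distance of $\prm(r,d)$ by restricting $f$ to $G$ and introducing suitable coordinates on $G$. A linear subspace $G\subset\mathbb{P}^n(\fq)$ of dimension $r$ is the projectivization of an $(r+1)$-dimensional $\fq$-vector subspace $V\subseteq\fq^{n+1}$. Fixing an $\fq$-linear isomorphism $\Phi\colon\fq^{r+1}\to V$ induces a bijection between the points of $\mathbb{P}^r(\fq)$ and the points of $G$; moreover, substituting the coordinate functions according to $\Phi$ turns $f$ into a polynomial $\widetilde f:=f\circ\Phi\in\fq[Y_0,\ldots,Y_r]_d$, since a linear substitution sends a homogeneous polynomial of degree $d$ either to a homogeneous polynomial of degree $d$ or to $0$.

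First I would check that this identification matches zeros with zeros: if $Q\in G$ corresponds to $[w]\in\mathbb{P}^r(\fq)$, then picking a representative vector $v=\Phi(w)$ of $Q$ one has $f(v)=\widetilde f(w)$, so $Q$ lies in $S$ if and only if $[w]$ lies in the support of $\widetilde f$ (computed in $\mathbb{P}^r(\fq)$). Consequently $|S\cap G|$ equals the cardinality of the support of $\widetilde f$ regarded as a polynomial in the $r+1$ variables $Y_0,\ldots,Y_r$, which by the definition of support is exactly the Hamming weight of the codeword $\psi(\widetilde f+(J_q)_d)$ of $\prm(r,d)$.

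Next, assume $S\cap G\neq\emptyset$. Then $\widetilde f$ fails to vanish at some point of $\mathbb{P}^r(\fq)$, hence $\psi(\widetilde f+(J_q)_d)$ is a \emph{nonzero} codeword of $\prm(r,d)$, and therefore its weight is at least the minimum distance $W^{(1)}_{\prm}(r,d)$. Combining this with the previous paragraph yields $|S\cap G|\ge W^{(1)}_{\prm}(r,d)$, which is the assertion; in the degenerate range where $\prm(r,d)$ is the full space the bound just reads $|S\cap G|\ge 1$, which is immediate.

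There is no deep obstacle in this argument; the only point that requires care is the bookkeeping around the change of coordinates — making sure that $\widetilde f$ is genuinely homogeneous of degree $d$ (so that it represents a codeword of $\prm(r,d)$ and not of a code of a different order), and that the bijection $\mathbb{P}^r(\fq)\leftrightarrow G$ is compatible with the normalization used to define $\psi$ (it is, because being or not being a zero of a homogeneous polynomial does not depend on the chosen representative of a projective point).
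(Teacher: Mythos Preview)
Your proof is correct and follows essentially the same approach as the paper: both restrict $f$ to $G$, identify $G$ with $\mathbb{P}^r(\fq)$, and invoke the minimum distance of $\prm(r,d)$. The only cosmetic difference is that the paper first applies a projective transformation to put $G$ in the standard position $X_{r+1}=\cdots=X_n=0$ and then sets those variables to zero, whereas you parametrize $G$ directly via a linear isomorphism $\Phi$; these are two ways of saying the same thing.
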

\begin{proof}
After a projective transformation we may assume that $G$ is given by 
$X_{r+1}=\cdots = X_n =0$. Assume that $|S \cap G| \neq \emptyset$ and let $g$ 
be the polynomial obtained from $f$ by 
evaluating $X_i = 0$ for $i = r + 1, \ldots, n$. Then $g$ is a nonzero 
homogeneous 
polynomial of degree $d$ and its support is equal to $S \cap G$. Considering 
$g$ as a polynomial which evaluates at points of $\mathbb{P}^r(\fq)$ we have $| 
S \cap G | = | 
g | \geq W^{(1)}_{\prm}(r, d)$.
\end{proof}

Observe that when $d = 1$ we do not have a next-to-minimal weight for 
$\prm(n, 1)$ 
since all hyperplanes in $\mathbb{P}^n(\fq)$ have the same number of zeros. In
\cite[Remark 3]{sorensen} S\o{}rensen proved that  
$\prm(n,d) = \mathbb{F}_q^N$ whenever $d \geq n (q - 1) + 1$, so  
from now on we assume that $2 \leq d \leq n(q - 1)$.

\begin{lemma}\label{projafim}
Let $f \in \fq[X_0,\ldots , X_n]_d$ be a polynomial with a nonempty  support 
$S$.  If there exists a hyperplane $H \subset \mathbb{P}^n(\fq)$
such that $S \cap H = \emptyset$ and $|f| > W^{(1)}_{\prm}(n, d)$ then 
$|f| \geq W^{(2)}_{\RM}(n, d - 1)$.
\end{lemma}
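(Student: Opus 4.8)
The plan is to reduce the projective situation to the affine one via the hyperplane $H$ that misses the support. Since $S \cap H = \emptyset$, after a projective transformation I would assume $H$ is the hyperplane at infinity $X_0 = 0$. Then $f$ restricted to the affine chart $X_0 \neq 0$ corresponds to dehomogenizing: set $g(X_1,\ldots,X_n) = f(1, X_1, \ldots, X_n)$. Because $f$ has no zeros on $H$, the "leading form" of $f$ (its restriction to $X_0 = 0$) is a nonvanishing homogeneous polynomial of degree $d$ on $\mathbb{P}^{n-1}(\fq)$; in the binary case $q = 2$ this forces $d \leq n$ and actually pins down that leading form tightly. The key point is that $|f|$ equals the number of affine points of $\mathbb{A}^n(\fq)$ where $g$ does not vanish, i.e. $|f| = \omega$ where $\omega$ is the Hamming weight of $\varphi(g + I_q)$, and $g$ is a polynomial of degree at most $d$ (in fact the top-degree part of $g$ is controlled by the leading form of $f$).

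Next I would analyze $\deg g$. If $\deg g \leq d - 1$, then $\varphi(g + I_q)$ is a codeword of $\RM(n, d-1)$, and since $|f| = \omega > W^{(1)}_{\prm}(n,d) = W^{(1)}_{\RM}(n, d-1)$ by \eqref{serre-sor}, the weight $\omega$ is a weight of $\RM(n,d-1)$ strictly exceeding its minimum distance, hence $\omega \geq W^{(2)}_{\RM}(n, d-1)$, which is exactly the desired bound. The remaining case is $\deg g = d$: here I must rule out that the degree-$d$ part of $g$ actually contributes, or else handle it directly. This is where the hypothesis that $S\cap H=\emptyset$ does real work: the degree-$d$ homogeneous part of $g$ is the dehomogenization of $f$'s leading form evaluated appropriately, and since that leading form has empty zero set on $\mathbb{P}^{n-1}(\fq)$, over $\mathbb{F}_2$ one shows it must be (a unit times) $\sigma = \prod$ over a suitable product of linear forms or, more precisely, that $f$ itself can be rewritten modulo $(J_q)_d$ as $X_0 \cdot g^{(h)}$ for some $g^{(h)}$ of degree $d-1$ — essentially because any degree-$d$ form vanishing nowhere on a hyperplane can, modulo the relations defining $J_q$, be stripped of that hyperplane as a factor. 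Then $|f| = \omega'$ is a weight of $\RM(n, d-1)$ and the same comparison argument finishes.

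The main obstacle I anticipate is the $\deg g = d$ case: making rigorous the claim that $f$ is equivalent modulo $(J_q)_d$ to a multiple of $X_0$ by a form of degree $d-1$. The clean way is to use that $X_0^2 - X_0 X_i \equiv 0$-type relations (the generators $X_j^q X_i - X_i^q X_j$ with $q = 2$, i.e. $X_j^2 X_i - X_i^2 X_j = X_i X_j(X_j - X_i)$) let me reduce any monomial appearing in $f$ that is divisible by some $X_i^2$ with $i \geq 1$; after full reduction, if $f$ had no zero on $X_0 = 0$, the surviving leading form on $X_0=0$ must be the unique (up to the relations) nonvanishing form, which is a product $\prod_{i=1}^{n}(\text{something})$ — and crucially this form, viewed in $\mathbb{P}^{n-1}$, has degree forcing it to ``use up'' a factor that can be traded for $X_0$. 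I would handle the bookkeeping by splitting $f = X_0 h + f_0$ where $f_0 \in \fq[X_1,\ldots,X_n]_d$ is the part with no $X_0$, noting $f_0$ is the leading form, and then using the nonvanishing of $f_0$ on $\mathbb{P}^{n-1}(\fq)$ together with S\o rensen's lower bound to force $d$ small enough (over $\mathbb{F}_2$, $d \le n$) and $f_0$ rigid enough that $f \equiv X_0 g^{(h)} \pmod{(J_q)_d}$; then $|f| = |\varphi(g+I_q)|$ with $\deg g \le d-1$ and we conclude as above. If that rigidity argument is delicate, the fallback is to appeal directly to Lemma \ref{suporte} applied to hyperplanes other than $H$ to squeeze $|f|$ from below, but I expect the dehomogenization route to be the cleanest.
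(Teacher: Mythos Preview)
Your overall plan --- move $H$ to $\{X_0=0\}$ by a projective change of coordinates, dehomogenize, and compare with weights in $\RM(n,d-1)$ --- is exactly the paper's approach. But you have inverted the meaning of the hypothesis at the crucial step. The support $S$ is the set of points where $f$ is \emph{nonzero}, so $S\cap H=\emptyset$ says that $f$ \emph{vanishes identically} on $H$, not that ``$f$ has no zeros on $H$''. Consequently the ``leading form'' $f_2:=f|_{X_0=0}\in\fq[X_1,\ldots,X_n]_d$ vanishes at every point of $H\cong\mathbb{P}^{n-1}(\fq)$, hence (being homogeneous and not involving $X_0$) at every point of $\mathbb{P}^n(\fq)$; that is, $f_2\in J_q$. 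All of your worry about a nowhere-vanishing leading form, the ``rigidity'' argument, and the restriction to $q=2$ is therefore built on a misreading and is unnecessary.

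With the correct reading the ``$\deg g=d$'' obstacle simply disappears, and this is what the paper does: write $f=X_0 f_1+f_2$ with $f_1\in\fq[X_0,\ldots,X_n]_{d-1}$ and $f_2\in\fq[X_1,\ldots,X_n]_d$; since $f_2\in J_q$ we have $f\equiv X_0 f_1\pmod{(J_q)_d}$, so $|f|$ equals the number of points of $\mathbb{A}^n(\fq)$ where $g:=f_1(1,X_1,\ldots,X_n)$ is nonzero. Now $\deg g\le d-1$ automatically, so $\varphi(g+I_q)\in\RM(n,d-1)$ with weight $|f|>W^{(1)}_{\prm}(n,d)=W^{(1)}_{\RM}(n,d-1)$, forcing $|f|\ge W^{(2)}_{\RM}(n,d-1)$. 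No case analysis on $\deg g$, no appeal to the specific generators of $J_q$, and no restriction to $q=2$ is needed.
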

\begin{proof}
After a projective transformation we may assume that $H$ is the hyperplane 
defined by $X_0 = 0$.  Writing $f = X_0 f_1 + f_2$,  
where $f_1 \in \fq[X_0, \ldots, X_n]_{d - 1}$ and  $f_2 \in \fq[X_1, \ldots, 
X_n]_d$, from $S \cap H = \emptyset$ we get that $f_2$ vanishes on $H$ and a 
fortiori on $\mathbb{P}^n(\fq)$, so $f_2 \in J_q$. Let $g$ be the 
polynomial obtained from $f_1$ by evaluating $X_0 = 1$, then $g$ is not zero 
(otherwise $S = \emptyset$) and $\deg(g) \leq d - 1$. Considering 
$g$ as a polynomial which evaluates at $\mathbb{A}^n(\fq)$ we see that the 
number $|g|$ of points where $g$ is not zero is equal to $|f|$. Since   
$|f| > W^{(1)}_{\prm}(n, d) = W^{(1)}_{\RM}(n, d - 1)$ we must have $|g| \geq 
W^{(2)}_{\RM}(n, d - 1)$.
\end{proof}

In what follows the integers $k$ and $\ell$ will always be the ones uniquely 
defined 
by the equality   
\[ 
d - 1 = k (q - 1) + \ell
\] 
with $0 \leq k \leq n - 1$ and $0 < \ell \leq q - 1$.
Then, from the data on the minimum distance of $\RM(n,d)$ we get, for $0 \leq r 
\leq n$, that 
\[
W^{(1)}_{\prm}(r, d) = W^{(1)}_{\RM}(r, d - 1) = 
\left\{
\begin{matrix}
1                   & \text{if } 0 \le r \le k ; \\
(q - \ell)q^{r-k-1} &  \text{if } k < r \leq n\, . 
\end{matrix}
\right.
\]

\begin{proposition} \label{desigualdadeS1}
Let $S \subset \mathbb{P}^n(\fq)$ be a nonempty set and
assume that $S$ has the following properties:
\begin{enumerate}
	\item \label{propS1} $|S| <
	\left(
	1 + \dfrac{1}{q}
	\right)
	(q  - \ell) q^{n-k-1} 
	$.
	\item For every linear subspace $G \subset \mathbb{P}^n(\fq)$
	of  dimension $s$, with $s \in \{1, \ldots, n-1\}$,  either $S \cap G = 
	\emptyset$ or
	$|S \cap G| \ge 
	W^{(1)}_{\prm}(s, d)$.
\end{enumerate}
Then there  exists a hyperplane $H \subset \mathbb{P}^n(\fq)$  such that $S 
\cap H = 
\emptyset$.
\end{proposition}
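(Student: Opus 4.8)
The plan is to argue by contradiction: assume that every hyperplane $H \subset \mathbb{P}^n(\fq)$ meets $S$, and derive a lower bound on $|S|$ that contradicts property~\ref{propS1}. The basic counting tool is the standard double count of incidences between points of $S$ and hyperplanes through a fixed linear subspace. Concretely, I would fix a linear subspace $G$ of small dimension — most naturally a point $P \notin S$, or perhaps a suitable line or $k$-plane chosen using property~2 — and count pairs $(Q, H)$ with $Q \in S$, $Q \in H$, and $G \subset H$. Each point $Q \notin G$ lies on a predictable number of hyperplanes containing $G$ (namely $(q^{n-1-\dim G}-1)/(q-1) \cdot q^{?}$-type expressions, i.e.\ the number of hyperplanes of $\mathbb{P}^n$ through a given $(\dim G{+}1)$-flat), while each such hyperplane $H$ is itself a $\mathbb{P}^{n-1}$ and the trace $S \cap H$ has size at least $W^{(1)}_{\prm}(n-1,d)$ by the assumption that $H$ meets $S$ together with property~2 applied with $s = n-1$.

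The key inductive idea is that property~2 is ``self-similar'': if $H \cong \mathbb{P}^{n-1}(\fq)$ meets $S$, then $S \cap H$ is a nonempty subset of $\mathbb{P}^{n-1}(\fq)$ satisfying the analogue of property~2 inside $H$ (intersections with linear subspaces of $H$ are intersections with linear subspaces of $\mathbb{P}^n$), so one can hope to run an induction on $n$. The base of the induction should be a small value of $n$ relative to $k$ (for instance $n = k+1$, where $W^{(1)}_{\prm}(n,d) = q - \ell$ and the geometry is that of a single projective line's worth of slack), handled by direct inspection. For the inductive step, suppose every hyperplane of $\mathbb{P}^n$ meets $S$; pick any point $Q_0 \in S$ and count hyperplanes through $Q_0$: there are $(q^n-1)/(q-1)$ of them, each meeting $S$ in at least $W^{(1)}_{\prm}(n-1,d) = (q-\ell)q^{n-k-2}$ points, and each point of $S \setminus \{Q_0\}$ lies on $(q^{n-1}-1)/(q-1)$ of them while $Q_0$ lies on all. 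Rearranging the resulting inequality
\[
\frac{q^n-1}{q-1}\,(q-\ell)q^{n-k-2} \;\le\; 1\cdot\frac{q^n-1}{q-1} + (|S|-1)\,\frac{q^{n-1}-1}{q-1}
\]
gives a lower bound on $|S|$ which I expect to be at least $(1 + 1/q)(q-\ell)q^{n-k-1}$, contradicting property~\ref{propS1}. If this crude count is not quite sharp enough, the refinement is to use property~2 with a cleverly chosen intermediate subspace $G$ (dimension around $k$ or $k+1$, where $W^{(1)}_{\prm}$ transitions between $1$ and the exponential formula) so that the trace bound $W^{(1)}_{\prm}(s,d)$ is used where it is strongest.

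The main obstacle I anticipate is making the counting inequality tight enough: the naive "every hyperplane through a point" count may lose a factor and fail to reach the threshold $(1+1/q)(q-\ell)q^{n-k-1}$, so I expect to need either a more careful choice of the pivot subspace $G$ (exploiting that $S$ must avoid \emph{some} subspaces of intermediate dimension, forcing extra structure) or a two-step argument that first bounds $|S \cap H|$ from below for a single hyperplane and then bounds how $S$ distributes over a pencil of hyperplanes. A secondary technical point is verifying that $S \cap H$ genuinely inherits property~2 with the \emph{same} parameters $k,\ell$ (i.e.\ that $W^{(1)}_{\prm}(s,d)$ does not change when we pass to $\mathbb{P}^{n-1}$), which is immediate from the displayed formula for $W^{(1)}_{\prm}(r,d)$ since it depends on $r$ only through the cases $r \le k$ and $r > k$. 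Once the numerics close, the contradiction yields the existence of a hyperplane disjoint from $S$.
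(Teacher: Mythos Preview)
Your concrete count does not close. With $W = W^{(1)}_{\prm}(n-1,d) = (q-\ell)q^{n-k-2}$, your displayed inequality rearranges to
\[
|S| \ \ge\ 1 + \frac{q^n-1}{q^{n-1}-1}\,(W-1)
      \ =\ 1 + \Bigl(q + \tfrac{q-1}{q^{n-1}-1}\Bigr)(W-1)
      \ \approx\ (q-\ell)q^{n-k-1} - q + 1,
\]
which falls short of the target $(1+1/q)(q-\ell)q^{n-k-1}$ by roughly $(q-\ell)q^{n-k-2}$. Pivoting instead at a point $P\notin S$ gives $|S|\ge \frac{q^n-1}{q^{n-1}-1}\,W$, which fails for the same reason: hyperplanes through a single point overlap heavily, and the double count loses precisely the factor you need. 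Your proposed fixes (induction on $n$, or an unspecified pivot of dimension ``around $k$'') are not worked out; in particular, to run the induction you would need a hyperplane $H$ with $|S\cap H| < (1+1/q)(q-\ell)q^{n-k-2}$, and nothing you have written produces one.

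The paper avoids any double count. It takes $F$ to be a linear subspace of \emph{maximal} dimension $r$ with $F\cap S=\emptyset$ (at least a point is available since $|S|<q^n<|\mathbb{P}^n(\fq)|$). The $(r{+}1)$-dimensional subspaces $G\supset F$ then \emph{partition} $\mathbb{P}^n\setminus F$; there are $(q^{n-r}-1)/(q-1)$ of them, and by maximality of $r$ every one meets $S$. Property~2 now gives, with no overlap loss,
\[
|S| \ =\ \sum_{G\supset F} |S\cap G| \ \ge\ \frac{q^{n-r}-1}{q-1}\;W^{(1)}_{\prm}(r+1,d),
\]
and comparing with property~1 forces $r=n-1$ after a short case split $r<k$ versus $r\ge k$. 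The idea you are missing is exactly this ``maximal disjoint subspace plus the pencil one dimension up'': it replaces your lossy hyperplane double count by an exact partition at the right scale.
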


%
%
\begin{proof} We start by noting that 
\[
|S| <	 \left(
	1 + \dfrac{1}{q}
	\right)
	(q  - \ell) q^{n-k-1}  = \left( q - \ell + 1 - \dfrac{\ell}{q} \right) 
	q^{n-k-1}  < q^{n-k} \le q^n < |\mathbb{P}^n(\fq)| \, , 
\]
so let $0 \leq r < n$ be the largest integer such that there is a linear 
subspace $F \subset \mathbb{P}^n(\fq)$ of dimension $r$  satisfying $S \cap F = 
\emptyset$, we want to show that $r = n - 1$. Let 
$$
\mathcal{G}_F = \{ G \text{ a linear subspace of } \mathbb{P}^n(\fq) \mid F 
\subset G \text{\ \ and } \dim G = r +1
\}\, .
$$
The intersection of  two distinct  elements of $\mathcal{G}_F$ is $F$, $|G| 
= (q^{r + 2} - 1)/(q - 1)$ for all $G \in \mathcal{G}_F$ and any point of 
$\mathbb{P}^n(\fq)$ outside $F$ belongs to some $G \in \mathcal{G}_F$
hence 
\[
|\mathcal{G}_F|\left(\dfrac{q^{r + 2} - 1}{q - 1} - \dfrac{q^{r + 1} - 1}{q - 
1}\right) + 
\dfrac{q^{r + 1} - 1}{q - 1} = \dfrac{q^{n + 1} - 1}{q - 1} 
\]
and we get $|\mathcal{G}_F| = (q^{n - r} - 1)/(q - 1)$. From 
$S = \bigcup_{G \in \mathcal{G}_F} (S \cap G)$, $S \cap F = \emptyset$ and 
property 2 we get  
\[
|S| =  \sum_{G \in \mathcal{G}_F} |S \cap G| \ge \dfrac{q^{n - r} - 1}{q - 1} 
\cdot 
W^{(1)}_{\prm}(r+1, d)  \, .
\]
Assume that $r < k$, then $W^{(1)}_{\prm}(r + 1, d) = 1$ and from property 1 we 
get $(q^{n-r}-1)/(q-1)  < (1 + 1/q) (q  - \ell) q^{n-k-1}$. Since the left-hand 
side decreases with $r$ we plug in $r = k - 1$ and get   
\[
\dfrac{q^{n-k+1}-1}{q-1} = q^{n - k} + \cdots + q + 1   <  \left(
q - \ell + 1 - \dfrac{\ell}{q}
\right)
 q^{n-k-1} \leq q^{n - k} - q^{n - k - 2} 
\]
which is absurd. Now we assume that $k \leq r \leq n -1$, and again from 
property 1 we get  	
\[
\dfrac{q^{n-r}-1}{q-1} (q - \ell) q^{r-k} <  \left(
1 + \dfrac{1}{q}
\right)
(q  - \ell) q^{n-k-1},
\]
hence $q^{n-r}-1   <
q^{n-r}   - q^{n-r-2}$ which is only possible when $r = n -1$.  
\end{proof}

\begin{proposition} \label{desigualdadeS2}
Let $S \subset \mathbb{P}^n(\fq)$ be a nonempty set and
assume that $S$ has the following properties:
\begin{enumerate}
\item  $|S| \le 
\left(1 + \dfrac{1}{(q-\ell)}  \right)
(q - \ell)q^{n-k-1}=(q - \ell + 1)q^{n-k-1}$.
\item For every linear subspace $G \subset \mathbb{P}^n(\fq)$
of  dimension $s$, with $s \in \{1, \ldots, n-1\}$,  either $S \cap G = 
\emptyset$ or
$|S \cap G| \ge 
W^{(1)}_{\prm}(s, d)$.
\end{enumerate}
Then there exists  $r \ge k$ and a linear subspace 
$H_r \subset \mathbb{P}^n(\fq)$ of dimension  $r$ such that $S \cap H_r = 
\emptyset$.
\end{proposition}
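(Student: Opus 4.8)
The plan is to run essentially the same argument as in the proof of Proposition~\ref{desigualdadeS1}, but to stop as soon as the sub-case $r < k$ has been ruled out: the statement we want here asserts only that the largest dimension of an $S$-avoiding linear subspace is at least $k$, not that it equals $n-1$, so the weaker size bound in hypothesis~1 suffices.

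First I would observe that hypothesis~1 gives $|S| \le (q-\ell+1)q^{n-k-1} \le q^{n-k} \le q^n < |\mathbb{P}^n(\fq)|$, so $S$ misses at least one point of $\mathbb{P}^n(\fq)$. Hence the set of integers $r$ for which some linear subspace $F$ of dimension $r$ satisfies $S \cap F = \emptyset$ is a nonempty subset of $\{0,1,\ldots,n-1\}$; it cannot contain $n$ since $S$ is nonempty. Let $r$ be its maximum, realized by a subspace $F$; taking $H_r = F$, it then suffices to prove $r \ge k$. If $k = 0$ this is immediate, so I would assume $k \ge 1$ and argue by contradiction that $r \le k-1$, noting that then $1 \le r+1 \le k \le n-1$, so the dimension $r+1$ lies in the range where property~2 applies.

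Next I would recycle the combinatorics already set up in Proposition~\ref{desigualdadeS1}: letting $\mathcal{G}_F$ be the family of $(r+1)$-dimensional subspaces containing $F$, one has $|\mathcal{G}_F| = (q^{n-r}-1)/(q-1)$, the sets $S \cap G$ with $G \in \mathcal{G}_F$ partition $S$ (two distinct members of $\mathcal{G}_F$ meet exactly in $F$, which avoids $S$, and every point off $F$ lies in one such $G$), and none of these sets is empty, since an empty one would produce an $S$-avoiding subspace of dimension $r+1$, contradicting the maximality of $r$. Because $r+1 \le k$, we have $W^{(1)}_{\prm}(r+1,d) = 1$, so property~2 forces $|S \cap G| \ge 1$ for every $G$, whence $|S| \ge (q^{n-r}-1)/(q-1)$; as this lower bound decreases in $r$, substituting $r = k-1$ yields $|S| \ge (q^{n-k+1}-1)/(q-1) = q^{n-k} + q^{n-k-1} + \cdots + 1 > q^{n-k}$.

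Finally I would reach the contradiction from hypothesis~1: since $\ell \ge 1$ we have $(q-\ell+1)q^{n-k-1} = q^{n-k} - (\ell-1)q^{n-k-1} \le q^{n-k}$, so $q^{n-k} < |S| \le q^{n-k}$, which is absurd; hence $r \ge k$. I do not anticipate a real obstacle here: the family $\mathcal{G}_F$ and its cardinality are exactly as in the previous proof, and the only things needing attention are the degenerate cases ($k = 0$, and checking that $r+1 \in \{1,\ldots,n-1\}$) together with the direction of the closing estimate, which goes through precisely because $\ell \ge 1$ makes $q - \ell + 1 \le q$.
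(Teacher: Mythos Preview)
Your proposal is correct and follows essentially the same route as the paper: take $r$ maximal with an $S$-avoiding $r$-flat $F$, count the $(r+1)$-flats through $F$, use maximality together with property~2 to bound $|S|$ from below by $(q^{n-r}-1)/(q-1)\cdot W^{(1)}_{\prm}(r+1,d)$, and derive a contradiction with hypothesis~1 when $r<k$. Your extra attention to the degenerate cases ($k=0$ and $r+1\in\{1,\ldots,n-1\}$) is a welcome bit of care that the paper leaves implicit.
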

\begin{proof}
We start as in the proof of the previous Lemma, and observe that 
\[
|S| \le  (q-\ell +1)q^{n-k-1} \le q^{n-k} \le q^n < |\mathbb{P}^n(\fq)| \, ,
\]
so let $0 \leq r < n$ be the largest integer such that there is a linear 
subspace $F \subset \mathbb{P}^n(\fq)$ of dimension $r$  satisfying $S \cap F = 
\emptyset$. Let 
$$
\mathcal{G}_F = \{ G \text{ a linear subspace of } \mathbb{P}^n(\fq) \mid F 
\subset G \text{\ \ and } \dim G = r +1
\}\, , 
$$
as before we have $|\mathcal{G}_F| = (q^{n - r} - 1)/(q - 1)$ and
\[
|S| =  \sum_{G \in \mathcal{G}_F} |S \cap G| \ge \dfrac{q^{n - r} - 1}{q - 1} 
\cdot 
W^{(1)}_{\prm}(r+1, d)  \, .
\]
Assume that $r < k$, then $W^{(1)}_{\prm}(r+1, d) = 1$ and from property 1 we 
get  
\[
\dfrac{q^{n-r}-1}{q-1} = q^{n-r-1}+\cdots + 1 \leq   (q - \ell + 1)q^{n-k-1}  = 
q^{n-k} - 
(\ell - 1 )q^{n-k-1} \le q^{n-r-1}
\]
which is absurd, so we must have $k \leq r \leq n - 1$.
\end{proof}

%

%
%

\section{Main results}

In this section we determine the next-to-minimal weight for the binary 
projective Reed-Muller codes. Recall that we are 
assuming that $2 \leq d \leq 
n(q - 1)$ so if $q = 2$ we have $n \geq 2$. Also, from $d - 1 = k (q - 1) + 
\ell$, with $0 \leq k \leq n -1$ and $0 < \ell \leq q - 1$ we see that when $q 
= 2$ we have $\ell = 1$ and $d - 1 = k + 1$, so from $2 \leq d \leq n$ we get 
$0 \leq 
k \leq n - 2$.
We recall that from \cite[Theorem 
9]{ballet} we have  
\[
W^{(2)}_{\RM}(n, d - 1) = \left\{ 
\begin{array}{ll} 
2^n & \text{ if \ }  k = 0  \\
3 \cdot 2^{n-k-2} & \text{ if \ } 0 < k < n - 2 \\
4 & \text{ if \ } k = n - 2 . \\
\end{array}
\right.
\]

\begin{theorem}  \label{main}
Let $q = 2$ and write $d - 1 = k + 1$. If  $0 < k < n - 2$ or $k = n - 2 \geq 
0$, 
then 
$$W^{(2)}_{\prm}(n, d) = W^{(2)}_{\RM}(n, d - 1),$$
and if $k = 0$ and $n \geq 3$ then 
\[
W^{(2)}_{\prm}(n, d) = 3 \cdot 2^{n - 2} < W^{(2)}_{\RM}(n, d - 1).
\]

\end{theorem}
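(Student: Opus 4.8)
The plan is to split into three cases according to $k$ and, in each case, use \eqref{2.1} as the upper bound and establish the matching lower bound by showing that any codeword of weight strictly below $W^{(2)}_{\RM}(n,d-1)$ (or the conjectured value, when it is smaller) cannot exist. Throughout, if $f\in\mathbb{F}_2[X_0,\dots,X_n]_d$ has support $S$ with $|f|>W^{(1)}_{\prm}(n,d)$, then Lemma \ref{suporte} shows $S$ satisfies property 2 of Propositions \ref{desigualdadeS1} and \ref{desigualdadeS2}; so the strategy is to feed the relevant size hypothesis on $|S|$ into one of these propositions, conclude that $S$ misses a hyperplane, and then invoke Lemma \ref{projafim} to transfer the problem to the affine Reed-Muller code $\RM(n,d-1)$, whose next-to-minimal weight is already known from \cite[Theorem 9]{ballet}. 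The contradiction (or the identification of the extremal configuration) then pins down $W^{(2)}_{\prm}(n,d)$ exactly.

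First I would treat the generic range $0<k<n-2$. Here $W^{(2)}_{\RM}(n,d-1)=3\cdot 2^{n-k-2}$, and with $q=2$, $\ell=1$ one checks that $3\cdot 2^{n-k-2}<\bigl(1+\tfrac1q\bigr)(q-\ell)2^{n-k-1}=\tfrac32\cdot 2^{n-k-1}=3\cdot 2^{n-k-2}$ — so in fact this threshold is exactly the bound in Proposition \ref{desigualdadeS1}, and I would instead use Proposition \ref{desigualdadeS2}, whose hypothesis $|S|\le (q-\ell+1)2^{n-k-1}=2^{n-k}$ is satisfied by any codeword with $|f|<2^{n-k}$. Suppose $|f|<3\cdot 2^{n-k-2}$; since $W^{(1)}_{\prm}(n,d)=(q-\ell)2^{n-k-1}=2^{n-k-1}<3\cdot 2^{n-k-2}\le 2^{n-k}$, property 2 holds, Proposition \ref{desigualdadeS2} produces a linear subspace $H_r$ of dimension $r\ge k$ disjoint from $S$, and then an induction (or a direct dimension count slicing $\mathbb{P}^n$ into the pencil of hyperplanes through $H_r$ as in the proofs of the propositions) forces the missed subspace up to a hyperplane; Lemma \ref{projafim} then gives $|f|\ge W^{(2)}_{\RM}(n,d-1)=3\cdot 2^{n-k-2}$, a contradiction. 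The case $k=n-2$ is similar but shorter: $W^{(2)}_{\RM}(n,d-1)=4$, $W^{(1)}_{\prm}(n,d)=2$, and a codeword with $|f|=3$ would, by Lemma \ref{suporte} applied to lines and planes together with the small-dimension case of the propositions, miss a hyperplane, contradicting Lemma \ref{projafim}; so $W^{(2)}_{\prm}(n,d)=4$.

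The genuinely different case is $k=0$, $n\ge 3$, where $d=2$: now $W^{(2)}_{\RM}(n,d-1)=2^n$ but I expect the true value to be the strictly smaller $3\cdot 2^{n-2}$. For the upper bound I would exhibit an explicit quadratic form over $\mathbb{F}_2$ — the support of a suitable rank-$2$ quadric such as $X_0X_1+X_2X_3$ (or $X_0X_1$ when one wants the minimum-weight comparison) — and count its projective zeros directly to get a codeword of weight exactly $3\cdot 2^{n-2}$; this uses the classification of quadrics over $\mathbb{F}_2$ and is a finite computation. For the lower bound, I would argue that any $f$ of degree $2$ with $|f|$ between $W^{(1)}_{\prm}(n,2)=2^{n-1}$ and $3\cdot 2^{n-2}$ must be a quadric of bounded rank, enumerate the possible ranks (at most two nonzero "blocks" plus possibly a linear part over $\mathbb{F}_2$), and check that none yields a weight strictly between these two values. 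The main obstacle, and the place I would spend the most care, is precisely this $k=0$ lower bound: the clean Proposition-plus-Lemma-\ref{projafim} machinery gives only $|f|\ge 2^n$ when a hyperplane is missed, which overshoots, so when no disjoint hyperplane exists one must analyze the quadric directly rather than reduce to the affine code — the propositions tell us a disjoint hyperplane need \emph{not} exist once $|S|\ge 3\cdot 2^{n-2}$, which is exactly why the projective next-to-minimal weight drops below the affine one here.
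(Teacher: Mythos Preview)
Your plan has the right overall shape but you have the difficulty distributed backwards, and the case $k=n-2$ contains a genuine gap.

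For $0<k<n-2$ and for $k=0$, you are overcomplicating things. In both ranges the threshold in Proposition~\ref{desigualdadeS1} is exactly $(1+\tfrac12)\cdot 2^{n-k-1}=3\cdot 2^{n-k-2}$. To prove the lower bound you only need to rule out codewords with $W^{(1)}_{\prm}(n,d)<|f|<3\cdot 2^{n-k-2}$, so the strict-inequality hypothesis of Proposition~\ref{desigualdadeS1} is satisfied and you get a disjoint hyperplane directly --- no detour through Proposition~\ref{desigualdadeS2} and no hand-waved ``induction up to a hyperplane'' is needed. Then Lemma~\ref{projafim} (in contrapositive form) says: disjoint hyperplane and $|f|<W^{(2)}_{\RM}(n,d-1)$ force $|f|=W^{(1)}_{\prm}(n,d)$. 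For $0<k<n-2$ the two thresholds coincide, so you are done; for $k=0$ you have $3\cdot 2^{n-2}<2^n=W^{(2)}_{\RM}(n,1)$, so the contrapositive of Lemma~\ref{projafim} again gives $|f|=W^{(1)}_{\prm}(n,2)$ immediately. The machinery does \emph{not} ``overshoot'' --- it gives the lower bound $3\cdot 2^{n-2}$ cleanly, and your proposed quadric-classification argument is unnecessary. (Your explicit quadric for the upper bound is fine and matches the paper's.)

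The real gap is $k=n-2$. Here $W^{(1)}_{\prm}(n,d)=2$ and $W^{(2)}_{\RM}(n,d-1)=4$, so you must rule out $|f|=3$. But the bound in Proposition~\ref{desigualdadeS1} is $(1+\tfrac12)\cdot 2=3$, so its hypothesis $|S|<3$ fails for weight $3$ and it tells you nothing. Proposition~\ref{desigualdadeS2} (with $|S|\le 4$) gives only a disjoint linear subspace $H_r$ with $r\ge n-2$; when $r=n-2$ you cannot push it up to a hyperplane, because by maximality of $r$ each of the three hyperplanes through $H_{n-2}$ meets $S$. Your one-line claim that ``a codeword with $|f|=3$ would \dots\ miss a hyperplane'' is exactly what fails. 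The paper handles this case by a substantial direct argument: assuming $|f|=3$ and $H_{n-2}=\{X_0=X_1=0\}$ disjoint from $S$, one writes $f=X_0(X_1 f_2+f_1)+X_1 f_0$, reads off the unique point of $S$ in each of the three hyperplanes $G_0,G_1,G_2$ through $H_{n-2}$, and then performs a case analysis on the dehomogenized polynomial $f_2(1,1,X_2,\dots,X_n)$ (using the known values of $W^{(1)}_{\RM}(n-1,n-2)$ and $W^{(2)}_{\RM}(n-1,n-2)$) to show that $|S\cap G_2|=1$ is impossible. This case-by-case argument is the actual substance of the proof and is entirely absent from your plan.
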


%
%
\begin{proof}
We start with the case where $k = 0$ and $n \geq 3$. Let 
$f\in \fq[X_0,\ldots , X_n]_d$ be a nonzero polynomial 
such that 
\[
0 \neq |f| < \left( 1 + \dfrac{1}{2} \right) W^{(1)}_{\prm}(n, d)
\]
and observe that 
\[
\left(1 + \dfrac{1}{2}\right) W^{(1)}_{\prm}(n, d) =  3 \cdot 2^{n - 2} < 
W^{(2)}_{\RM}(n, d-1) = 2^n.
\]
Let $S$ be the support of $f$, then $S$ has property 1 of 
Proposition \ref{desigualdadeS1} and  from Lemma \ref{suporte} it also has 
property 2 so 
there exists a hyperplane $H \subset \mathbb{P}^n(\fq)$ such that $S \cap H = 
\emptyset$ and from Lemma  \ref{projafim} we must have  
$|f|= W^{(1)}_{\prm}(n, d)$. This shows that  
\[
 3 \cdot 2^{n - 2} \leq 
W^{(2)}_{\prm}(n, d),
\]
and let $g = X_0 X_3 + X_1 X_2$.
Let $\widetilde{H}_{n-2}$ be the linear subspace defined by $X_0 = X_1 = 0$ and 
note that the support of $g$ does not meet $\widetilde{H}_{n-2}$. Counting the 
number of hyperplanes that contain $\widetilde{H}_{n-2}$ (as in the proof of 
Proposition 
\ref{desigualdadeS1}) we get that a total of three hyperplanes, which we call 
$G_0$, $G_1$ and $G_2$, whose  equations are, respectively,  $X_0 = 0$, $X_1 = 
0$ and $X_0 + X_1 = 0$. Now it is easy to check that in each of these 
hyperplanes 
we have $2^{n - 2}$ points in the support of $g$, hence $|g| = 3 \cdot 2^{n - 
2}$, which settles this case.

Assume now that  $k$ is in the range $0 < k < n-2$, and for $d = k + 2$ let 
$f\in \fq[X_0,\ldots , X_n]_d$ be a nonzero polynomial
such that 
\[
0 \neq |f| < \left( 1 + \dfrac{1}{2} \right) 2^{n-k-1}
\]
(we observe that such a polynomial exists because $W^{(1)}_{\prm}(n, d) = 
2^{n-k-1}$). Let $S$ be the support of $f$, then $S$ has property 1 of 
Proposition \ref{desigualdadeS1} and  from Lemma \ref{suporte} it also has 
property 2 so 
there exists a hyperplane $H \subset \mathbb{P}^n(\fq)$ such that $S \cap H = 
\emptyset$. From Lemma  \ref{projafim} we have that either 
$|f|= W^{(1)}_{\prm}(n, d)$ or $|f| \ge W^{(2)}_{\RM}(n, d - 1)$. Note 
that $W^{(2)}_{\RM}(n, d - 1) = 3 \cdot 2^{n-k-2} = (1 + 1/2) 2^{n-k-1}$, 
so we may conclude that $W^{(2)}_{\prm}(n, d) \geq W^{(2)}_{\RM}(n, d - 1)$, 
and we have already remarked in (\ref{2.1}) that $W^{(2)}_{\prm}(n, d) \leq 
W^{(2)}_{\RM}(n, d - 1)$.

Now we assume that $k = n - 2 \geq 0$, and let $f \in  \fq[X_0,\ldots , X_n]_d$ 
be 
a nonzero polynomial such that $0 \neq | f | \leq 4$ (such polynomial exists 
because $W^{(1)}_{\prm}(n, d) = 2$ in this case, where $d = n$). Then we may 
apply
Proposition \ref{desigualdadeS2} and find that  there exists
a linear subspace $H_r$ of dimension 
 $r \geq n - 2$ 
 such that  $H_r \cap S = \emptyset$. 

If $r=n-1$ then lemma \ref{projafim} implies that
$|f|= W^{(1)}_{\prm}(n, d)$ or $|f| \ge W^{(2)}_{\RM}(n, d - 1)$, and were 
done  because from (\ref{2.1}) we know that
$W^{(2)}_{\prm}(n, d) \leq W^{(2)}_{\RM}(n, d - 1)$. Thus we assume that $r = n 
- 2$ and after a projective transformation, if necessary, we assume that 
$H_{n-2}$ is the linear subspace defined by $X_0 = X_1 = 0$. 
As above we have three hyperplanes which contain $H_{n - 2}$, which 
we call 
$G_0$, $G_1$, $G_2$, and whose  equations are, respectively,  $X_0 = 0$, $X_1 = 
0$ and $X_0 + X_1 = 0$.
 Since $|S \cap G_i| \geq 1$ for $i = 0, 1 ,2$ we get 
that 
$| f | \geq 3 > W^{(1)}_{\prm}(n, d) = 2$ (recall that $d = n$). Thus $3 \leq | 
f| 
\leq 4$ and we want to prove that $| f | = 4 = W^{(2)}_{\RM}(n, d - 1)$. 

Let's 
assume, by means of absurd, that $| f | = 3$, so that $|S \cap G_i| = 1$ for $i 
= 0, 1, 2$ and let $(0:1:Q_0)$, $(1:0:Q_1)$ and 
$(1:1:Q_2)$ be the points of intersection of $S$ with $G_0$, $G_1$ and $G_2$ 
respectively. Write $f$ as $f = X_0(X_1 f_2 + f_1 )  + X_1 f_0 + f_3$, 
with   
$f_2 \in \fq[X_0,\ldots , X_n]_{d - 2}$, $f_1 \in \fq[X_0,X_2, \ldots, X_n]_{d 
- 1}$, $f_0 \in \fq[X_1,\ldots , X_n]_{d - 1}$ and $f_3 \in \fq[X_2,\ldots , 
X_n]_{d}$.  
Since $f$ vanishes on $H_{n - 2}$ we get that $f_3$ vanishes on $H_{n - 2}$ 
and a fortiori on $\mathbb{P}^n(\fq)$, so we may assume that $f = X_0(X_1 f_2 + 
f_1 )  + X_1 f_0$. From the definitions of $Q_0$, $Q_1$ and $Q_2$  
 we get that $f_0(1, Q_0) =  
1$,  $f_1(1,Q_1) = 1$ and 
\[
g(Q_2) := f_2(1, 1, Q_2) + f_1(1, Q_2) + f_0(1, Q_2) = 1.
\]
Observe that  $f_2(1,1,X_2,\ldots , X_n)$ is the zero polynomial or a 
polynomial of 
degree $d - 2 = n - 2$ taking values on $\mathbb{A}^{n-1}(\fq)$, and in the 
latter case we have that either 
\[
|f_2(1,1,X_2,\ldots , X_n)| = W^{(1)}_{\RM}(n - 1, n - 2) = 2  
\]
or
\begin{equation} \label{4}
|f_2(1,1,X_2,\ldots , X_n)| \geq W^{(2)}_{\RM}(n - 1, n - 2) = 4.
\end{equation}
We cannot have $f_2(1,1,X_2,\ldots , X_n) = 0$ otherwise, if $Q_0 \neq Q_1$ we 
would have $g(Q_0) = f_0(1, Q_0) = 1$ and $g(Q_1) = f_1(1, Q_1) = 1$, and 
if $Q_0 = Q_1$ then $g(Q) = 0$ for all $Q \in \mathbb{A}^{n-1}(\fq)$, so in 
both cases we have a contradiction with $|S \cap G_2| = 1$. So we assume that 
$f_2(1,1,X_2,\ldots , X_n) \neq 0$ and in what follows we show that also in 
this case we cannot have $|S \cap 
G_2| = 1$, which will conclude the proof that $|f| = 4$. We split the proof in 
two parts. \\
I)Suppose that $f_2(1,1,Q_2) = 0$, from $g(Q_2) = 1$ we must have $Q_2 = Q_0$ 
or 
$Q_2 = Q_1$, and $Q_0 \neq Q_1$. We will assume that $Q_2 = Q_0$, the case 
where $Q_2 = Q_1$ being similar. We know that there are at least two distinct 
points $Q_3, Q_4 \in \mathbb{A}^{n-1}(\fq)$ such that 
$f_2(1,1,Q_i) = 1$ for $i = 3, 4$. Clearly   $Q_3 \neq Q_2$, if $Q_3 \neq Q_1$ 
we get 
$g(Q_3)  = 1$ and if $Q_3 = Q_1$ then $Q_4 \neq Q_1$, $Q_4 \neq Q_2 = 
Q_0$ and $g(Q_4) = 1$, so  $|S \cap G_2| >  1$. \\
II) Now we assume that $f_2(1,1,Q_2) = 1$, from $g(Q_2) = 1$ we get that either 
$Q_2 = Q_0 = Q_1$,  or 
$Q_2 \neq Q_0$ and $Q_2 \neq Q_1$. From $|f_2(1,1,X_2,\ldots , X_n)|  \geq 2$ 
we know that there exists $Q_3 \neq Q_2$ such that $f_2(1,1,Q_3) = 1$, so  if 
$Q_2 = Q_0 = Q_1$ then $g(Q_3) = 1$ and $|S \cap G_2| >  1$. Thus we assume 
now that $Q_2 \neq Q_0$ and $Q_2 \neq Q_1$.  If $Q_0 = Q_1$ then in both cases  
$Q_3 = Q_0 = Q_1$ or $Q_3 \neq Q_0 = Q_1$  we have $g(Q_3) = 1$. So now we 
consider the case where $Q_0 \neq Q_1$. If $Q_3 \notin \{Q_0, Q_1\}$ then 
$g(Q_3) = 1$, if $Q_3 \in \{Q_0, Q_1\}$, say $Q_3 = Q_0$ (the case where $Q_3 = 
Q_1$  is similar) then $g(Q_1) = 1$ when $f_2(1,1,Q_1) = 0$,   if 
$f_2(1,1,Q_1) = 1$ then we already have three distinct points of $\mathbb{A}^{n 
- 1}(\fq)$ which are not zeros of $f_2(1,1,X_2,\ldots , X_n)$ (namely, $Q_1$, 
$Q_2$ and $Q_3$) so from inequality \eqref{4} above there is a point 
$\widetilde{Q}$, 
distinct from $Q_1$, $Q_2$ and $Q_3 = Q_0$ such that $f_2(1,1,\widetilde{Q}) = 
1$,  hence  $g(\widetilde{Q}) = 1$ and again $|S \cap G_2| >  1$. This 
completes the 
proof of the 
case $k = n - 2$ and of the Theorem. 
\end{proof}

In \cite{delsarte-et-al}  Delsarte et al.\ proved that the codewords of minimal 
weight in $\RM(n, d)$ are such that their support is the union of certain 
affine subspaces of $\mathbb{F}_q^{q^n}$, or equivalently, that these codewords 
may be obtained as the 
evaluation of polynomials whose classes in $\fq[X_1, \ldots, X_n]/I_q$ may be 
represented by the product of $d$ polynomials of degree 1.  In \cite{leduc} the 
author proves a similar result for the next-to-minimal codewords of $\RM(n, 
d)$, in the case where $q \geq 3$. In \cite{rolland2} (see also \cite{ballet}) 
the author proves that also for $\prm(n,d)$ the minimal weight codewords may be 
characterized as being the evaluation of certain homogeneous polynomials whose 
classes in  $\fq[X_0, \ldots, X_n]/J_q$ can be written as  
the product of linear factors, so that the zeros of such polynomials are over a 
union of hyperplanes. As a byproduct of the above proof we see that, for $q = 
2$, such statement is not true for the support of the next-to-minimal 
codewords, since in the case where $k = 0$ and $n \geq 3$ we presented a 
codeword whose zeros form an irreducible quadric in $\mathbb{P}^n(\fq)$.

\bibliographystyle{plain}

\end{document}